\DeclareMathOperator{\Var}{Var}
\DeclareMathOperator{\Lit}{Lit}
\DeclareMathOperator{\Aut}{Aut}
\newcommand{\F}{F}
\newcommand{\symm}{\sigma}
\newcommand{\assn}{\tau}
\newcommand{\wit}{\omega}
\newcommand{\ol}[1]{\overline #1}
\newcommand{\rat}{\textsc{rat}}
\newcommand{\drat}{\textsc{drat}}
\newcommand{\pr}{\textsc{pr}}
\newcommand{\sr}{\textsc{sr}}
\newcommand{\dsr}{\textsc{dsr}}
\newcommand{\lsr}{\textsc{lsr}}
\newcommand{\dsrtrim}{\textsc{dsr-trim}}
\newcommand{\lsrcheck}{\textsc{lsr-check}}
\newcommand{\veripb}{\textsc{VeriPB}}
\newcommand{\satsuma}{\textsc{satsuma}}
\newcommand{\dejavu}{\textsc{dejavu}}
\newcommand{\nauty}{\textsc{nauty}}
\newcommand{\Traces}{\textsc{Traces}}
\newcommand{\bliss}{\textsc{bliss}}
\newcommand{\breakid}{\textsc{BreakID}}
\newcommand{\cadical}{\textsc{CaDiCaL}}
\begin{document}

\title{Orbitopal Fixing in SAT}

\author{Markus Anders\inst{1}\orcidlink{0009-0004-5992-8433}
    \and Cayden Codel\inst{2}\orcidlink{0000-0003-3588-4873}
    \and Marijn J. H. Heule\inst{2}\orcidlink{0000-0002-5587-8801}}

\authorrunning{M. Anders et al.}

\institute{
    RPTU Kaiserslautern-Landau, Kaiserslautern, Germany
    \\ \email{anders@cs.uni-kl.de}
    \and 
    Carnegie Mellon University, Pittsburgh, PA, United States
    \\ \email{\{ccodel,mheule\}@cs.cmu.edu}
}

\maketitle

\setcounter{footnote}{0}

\begin{abstract}
Despite their sophisticated heuristics,
boolean satisfiability (SAT) solvers are still vulnerable to symmetry,
causing them to visit search regions
that are symmetric to ones already explored.
While symmetry handling is routine in other solving paradigms,
integrating it into state-of-the-art proof-producing SAT solvers is difficult:
added reasoning must be fast,
non-interfering with solver heuristics,
and compatible with formal proof logging.
To address these issues,
we present a practical static symmetry breaking approach based on \emph{orbitopal fixing},
a technique adapted from mixed-integer programming.
Our approach adds only \emph{unit clauses},
which minimizes downstream slowdowns,
and it emits succinct proof certificates in the substitution redundancy proof system.
Implemented in the \satsuma{} tool,
our methods deliver consistent speedups on symmetry-rich benchmarks with negligible regressions elsewhere. 
\end{abstract}

\section{Introduction}

Boolean satisfiability (SAT) solvers power a wide range of industrial and academic applications~\cite{handbook}.
Yet despite decades of innovation,
state-of-the-art SAT solvers still lack robust, broadly deployed mechanisms for symmetry reasoning,
even though such mechanisms are commonplace in other paradigms~\cite{PfetschR19,GentPP06}.
Without explicit symmetry reasoning,
solvers can waste significant amounts of time
exploring search regions that are isomorphic to ones already ruled out,
leading to substantial slowdowns on highly symmetric instances.

To address this problem,
prior work in SAT has explored both preprocessing (static) and on-the-fly (dynamic) symmetry-breaking techniques~\cite{CrawfordGLR96,AloulMS03,DevriendtBBD16,JunttilaKKK20,Sabharwal09,DevriendtBCDM12,Devriendt0B17,AndersBR24}. 
The most used approach in SAT is \emph{static symmetry breaking},
which adds constraints to the formula before solving to avoid isomorphic solutions.
For example, in graph coloring, one can fix the color of a designated vertex, since any valid coloring can be permuted accordingly. 
Static methods are attractive in practice because their overhead is often modest~\cite{AndersBR24}.

Although these techniques can yield substantial speedups on highly symmetric formulas,
they can also incur severe regressions elsewhere.
A key culprit of this slowdown is overly aggressive symmetry breaking:
Adding too many clauses to the formula
often causes the solver's performance to degrade,
especially when the formula is \emph{satisfiable}.
(For instance, see work by Aloul et al.\ \cite{AloulMS03}.)
Overall, symmetry handling techniques must strike a delicate balance between
reasoning strength, 
computational cost,
and minimal interference with solver heuristics.


Complicating this trade-off even further,
SAT symmetry-breaking techniques must also be compatible with proof production.
This is because
modern SAT solvers (since 2016) are \emph{certifying algorithms}~\cite{McConnellMNS11},
meaning that they emit formally checkable proofs that their answers are correct.
Any additional symmetry reasoning must therefore integrate cleanly with proof generation and verification.

Today,
practical SAT symmetry-breaking tools suffer from several drawbacks.
All current tools produce structured lex-leader constraints~\cite{CrawfordGLR96},
which can blow up the size of the formula
and degrade learned clause quality
when encoded into SAT.
Proof logging is also problematic.
Proof logging for practical symmetry-breaking tools was only introduced very recently
by means of the dominance rule~\cite{BogaertsGMN23}.
This approach has received notable success,
with an implementation in \breakid~\cite{DevriendtBCDM12}
earning a special prize at SAT Competition 2023.
But while dominance-based rules are very general,
the new proof systems needed to support them are complicated to implement,
and their proofs are slow to check.

Interestingly,
some symmetry-breaking techniques for mixed-integer programming (MIP)
avoid the problems of using large lex-leader constraints
by applying symmetry reasoning in a more surgical manner.
For formulas that exhibit so-called row symmetry,
\emph{orbitopal fixing}~\cite{KaibelPeinhardtPfetsch11} 
breaks symmetries by adding only \emph{unit clauses}.
This is accomplished by combining insights on symmetry
with insights on cardinality.
Adapting such a technique to SAT should have 
far fewer downsides than introducing long, structured constraints,
such as lex-leader constraints.

\paragraph{Contribution.}
To tackle the challenges discussed above,
we adapt orbitopal fixing from MIP to SAT
to introduce three new methods of practical symmetry handling.
All of our methods follow three guiding principles:
\vspace{-5pt}
\begin{enumerate}
    \item They exclusively add \emph{unit clauses} to the formula.
    \item They simultaneously exploit \emph{symmetry} and \emph{cardinality}.
    \item They generate succinct proof certificates
          in the substitution redundancy (\sr{}) proof system~\cite{GochtN21,BussHierarchy},
          without the need for dominance-based rules.
\end{enumerate}

We implement our new techniques in the state-of-the-art symmetry breaking tool \satsuma~\cite{AndersBR24}.\footnote{\url{https://github.com/markusa4/satsuma}.}
Despite the apparent restrictions\textemdash foregoing lex-leader constraints and feature-rich proof systems\textemdash it turns out that, indeed, our approach produces strong practical results:
\vspace{-5pt}
\begin{enumerate}
    \item The performance of the state-of-the-art SAT solver \cadical{}~\cite{Cadical2024} is substantially improved on the 
          SAT Competition 2025, the SAT anniversary track of 2022, and a set of highly symmetric crafted benchmarks.
    \item The preprocessing overhead is negligible (less than $1\%$ of average solve time).
    \item The performance regression on \emph{satisfiable instances} is significantly smaller than for lex-leader constraints
          (even though lex-leader constraints achieve overall better pruning than our techniques on \emph{unsatisfiable} instances).
    \item The \sr{} proofs are succinct, easy to generate, and efficient to check.
\end{enumerate}

Overall,
our techniques offer a more lightweight, surgical, and stable approach to SAT symmetry breaking
than lex-leader constraints,
and our techniques can be easily combined with other symmetry handling methods.


\section{Preliminaries}

We assume that the reader is generally familiar with concepts from SAT solving.
For a broad introduction to the topic,
see the Handbook of Satisfiability~\cite{handbook}.

The propositional formulas we consider in this paper
are all in \emph{conjunctive normal form} (CNF),
meaning that they are conjunctions of disjunctive \emph{clauses}
containing \emph{literals}.
A literal~$\ell$ is either
a variable~$v$
or its negation~$\overline{v}$. 
In this paper,
we interpret clauses and formulas as sets.
For example,
we sometimes write the clause $(x \vee \overline{y} \vee z)$ as $\{ x, \overline{y}, z \}$.
Let $\Var(\F)$ and $\Lit(\F)$ be the set of variables and literals occurring in $\F$,
respectively.

Two formulas $\F$ and $\F'$ are \emph{equisatisfiable}
if $\F$ is satisfiable iff $\F'$ is satisfiable.
This definition is bidirectional,
but since we only consider formulas~$\F'$
that are formed by adding clauses to $\F$
(i.e., $\F \subseteq \F'$),
the reverse direction is trivial,
and thus we omit it in our proofs.

\subsection{Unique Literal Clauses}

A clause $C \in \F$ is a \emph{unique literal clause} (ULC)
with respect to $\F$
if none of its literals $\ell \in C$ appear elsewhere in $\F \setminus C$.
ULCs enjoy the following property,
which is key to adapting orbitopal fixing to SAT (see Section~\ref{subsec:orbitopal}).

\begin{lemma}[\cite{ShengRH25}, Lemma 4]
    Let $\F$ be a formula,
    and let $C \in \F$ be a ULC.
    If $\F$ is satisfiable, 
    then it can be satisfied by a truth assignment that sets exactly one literal in $C$ to true. 
    \label{lem:ulc}
\end{lemma}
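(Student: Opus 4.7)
The plan is to take an arbitrary satisfying assignment $\assn$ of $\F$ and surgically modify it into an assignment $\assn'$ that still satisfies $\F$ but makes exactly one literal of $C$ true. Since $\assn$ satisfies $C$, I would pick some literal $\ell^\star \in C$ that is already true under $\assn$ and designate it as the unique survivor. For every other $\ell \in C \setminus \{\ell^\star\}$ currently true under $\assn$, I would flip $\var(\ell)$ so that $\ell$ becomes false. Let $\assn'$ be the resulting assignment; by construction $\ell^\star$ is the only literal of $C$ true under $\assn'$, so the cardinality condition is immediate.

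The substantive part of the argument is to show that $\assn'$ still satisfies every clause $C' \in \F \setminus C$. I would fix a witness literal $m \in C'$ that is true under $\assn$ and argue that it remains true under $\assn'$. The only variables on which $\assn$ and $\assn'$ disagree are of the form $\var(\ell)$ for some flipped $\ell \in C \setminus \{\ell^\star\}$, so if $m$'s value changed from $\assn$ to $\assn'$, then $\var(m) = \var(\ell)$ for such an $\ell$. The ULC property forbids $\ell$ itself from appearing in $C'$, which forces $m = \ol{\ell}$; but $\ell$ was true under $\assn$, so $m = \ol{\ell}$ was false under $\assn$, contradicting the choice of $m$. Hence $m$ keeps its value and $C'$ stays satisfied.

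The main obstacle I anticipate is keeping polarities straight: the ULC hypothesis only forbids the literals of $C$ (with their specific polarity) from reoccurring in $\F \setminus C$, not the underlying variables, so the complementary literals $\ol{\ell}$ are perfectly free to appear elsewhere. The design choice that defuses this is that I only flip variables whose literal in $C$ was true under $\assn$, which guarantees the complementary literal $\ol{\ell}$ was already false under $\assn$ and therefore could not have been the witness satisfying any clause outside $C$. Everything else is routine bookkeeping.
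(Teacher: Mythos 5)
Your proposal is correct and takes essentially the same approach as the paper's (commented-out) proof: both start from a satisfying assignment, keep one true literal $\ell^\star$ of $C$, falsify all other true literals of $C$, and invoke the ULC property to argue that the only literals whose values change in $\F \setminus C$ are of the form $\ol{\ell}$, which go from false to true and hence cannot unsatisfy anything. You merely unfold the paper's one-line observation (``falsifying $\ell'$ can only increase the number of satisfied literals in $\F \setminus C$'') into an explicit contradiction about a witness literal $m$; both arguments implicitly assume $C$ is not a tautology, which is standard for CNF clauses.
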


Another nice property of ULCs is that the set of ULCs in a formula~$F$ can be computed in linear time:
First store how many times each literal appears in $\F$,
and then check each clause to see if all of its literals appear exactly once in $\F$.

\subsection{Syntactic Symmetry of Formulas} \label{sec:symmetries}

A symmetry~$\symm$ of a formula $\F$ is a permutation of $\Lit(\F)$
that maps $\F$ to itself.
Formally, let $\symm$ be a permutation of $\Lit(\F)$,
and define $\symm(\F)$ as
the formula created by relabeling the literals of $\F$ under $\symm$.
Then $\symm$ is a \emph{syntactic symmetry} of $\F$ if:
\begin{enumerate}
    \item $\symm(\F) = \F$, and
    \item $\neg \symm(l) = \symm(\ol{\ell})$ for all $\ell \in \Lit(\F) \quad$
    (i.e., $\symm$ commutes with negation).
\end{enumerate}

When defining a symmetry~$\symm$,
condition (2) says it is sufficient to specify $\symm(\ell)$ for only positive literals~$\ell$.
We will write symmetries as $\symm \coloneqq ( \ell \mapsto \ell', \dots)$,
meaning that $\symm (\ell) = \ell'$.
All literals not explicitly listed are assumed to map back to themselves,
i.e.,
$\symm (x) = x$.

\begin{figure}[t]
\tikzset{
  var/.style={
    draw,
    circle,
    fill=orange!30,
    minimum size=5mm, 
    inner sep=0pt,
  },
  clause/.style={
    draw,
    circle,
    fill=teal!30,
    minimum size=5mm, 
    inner sep=0pt,
  }
}
\centering
    \begin{tikzpicture}[scale=1.1]
    \foreach \name/\angle in {x/90, y/210, z/330}{
        \node[var] (\name) at (\angle:0.7) {$\name$};
        \node[var] (n\name) at (\angle:1.5) {$\overline\name$};
        \node[] (i\name) at (\angle:2.5) {};
    }
    \foreach \name/\angle in {a/150, b/270, c/30}{
        \node[clause] (\name) at (\angle:1.2) {$\vee$};
    }
    \node[clause] (d) at (0,0) {$\vee$};

    \draw (x) -- (nx);
    \draw (y) -- (ny);
    \draw (z) -- (nz);

    \draw (a) -- (nx);
    \draw (a) -- (ny);

    \draw (b) -- (ny);
    \draw (b) -- (nz);

    \draw (c) -- (nx);
    \draw (c) -- (nz);

    \draw (d) -- (x);
    \draw (d) -- (y);
    \draw (d) -- (z);

    \draw[-latex,thick,color=green!50!black] ($ (ix)!0.25!(iy) $) to[bend right] ($ (ix)!0.75!(iy) $);
    \draw[-latex,thick,color=green!50!black] ($ (iy)!0.25!(iz) $) to[bend right] ($ (iy)!0.75!(iz) $);
    \draw[-latex,thick,color=green!50!black] ($ (iz)!0.25!(ix) $) to[bend right] ($ (iz)!0.75!(ix) $);
\end{tikzpicture} \hspace{0.5cm}
    \begin{tikzpicture}[scale=1.1]
    \foreach \name/\angle in {x/90, y/210, z/330}{
        \node[var] (\name) at (\angle:0.7) {$\name$};
        \node[var, fill = purple!30] (n\name) at (\angle:1.5) {$\overline\name$};
        \node[] (i\name) at (\angle:2.5) {};
    }
    \foreach \name/\angle in {a/150, b/270, c/30}{
        \node[clause] (\name) at (\angle:1.2) {$\vee$};
    }
    \node[clause, fill=green!50!black!30] (d) at (0,0) {$\vee$};

    \draw (x) -- (nx);
    \draw (y) -- (ny);
    \draw (z) -- (nz);

    \draw (a) -- (nx);
    \draw (a) -- (ny);

    \draw (b) -- (ny);
    \draw (b) -- (nz);

    \draw (c) -- (nx);
    \draw (c) -- (nz);

    \draw (d) -- (x);
    \draw (d) -- (y);
    \draw (d) -- (z);

    \phantom{\draw[-latex,thick,color=green!50!black] ($ (ix)!0.25!(iy) $) to[bend right] ($ (ix)!0.75!(iy) $);}
    \phantom{\draw[-latex,thick,color=green!50!black] ($ (iy)!0.25!(iz) $) to[bend right] ($ (iy)!0.75!(iz) $);}
    \phantom{\draw[-latex,thick,color=green!50!black] ($ (iz)!0.25!(ix) $) to[bend right] ($ (iz)!0.75!(ix) $);}
\end{tikzpicture}
\caption{A graph modeling the symmetries of the formula $(x \vee y \vee z)
    \land (\ol x \vee \ol y)
    \land (\ol x \vee \ol z)
    \land (\ol y \vee \ol z)$.
    Every clause is connected to its component literals,
    and every literal is adjacent to its negation.
    On the left, the green arrows indicate the symmetry mapping $x$ to $y$, $y$ to $z$, and $z$ to $x$.
    On the right, the colors indicate the orbits of the vertices.}
    \label{fig:modelgraph}
\end{figure}
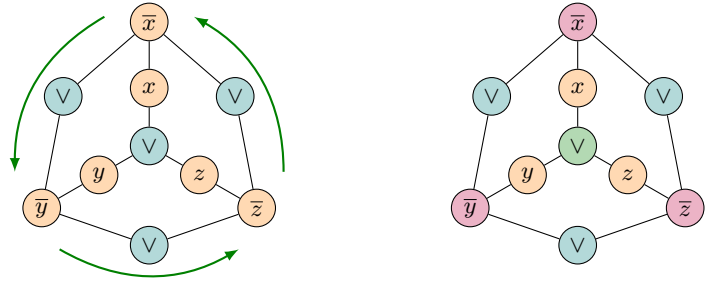

\begin{example} \label{ex:sym}
    Let $F = 
        (x \vee y \vee z)
        \land (\ol x \vee \ol y)
        \land (\ol x \vee \ol z)
        \land (\ol y \vee \ol z)$.
    Then the permutation
    $\symm := (x \mapsto y, y \mapsto z, z \mapsto x)$ is a symmetry of $\F$,
    since
    \begin{equation*}
        \symm(\F) =
        (y \vee z \vee x)
        \land (\ol y \vee \ol z)
        \land (\ol y \vee \ol x)
        \land (\ol z \vee \ol x)
        = \F.
    \end{equation*}
\end{example}

In practice, the symmetries of a CNF formula are computed
by modeling the formula as a graph
and then giving that graph to an off-the-shelf graph isomorphism solver,
such as
\nauty{}~\cite{practicaliso1, McKayP14},
\bliss{}~\cite{JunttilaK07,JunttilaK11}, 
\Traces{}~\cite{corr/abs-0804-4881,McKayP14}, 
or \dejavu{}~\cite{AndersS21,AndersS21b}.
Figure~\ref{fig:modelgraph} (left) illustrates a graph modeling the symmetries of Example~\ref{ex:sym}.
More-compact graph representations are typically used in practice~\cite{AloulRMS03}.

As it turns out,
the symmetries of a formula~$\Aut(\F)$\footnote{
    The notation $\Aut()$ is due to how the symmetries of $\F$ are also its \textbf{aut}omorphisms.
}
form a \emph{permutation group},
which means
we can use concepts from group theory to reason about them.
In this paper,
we use two such concepts:
stabilizers and orbits.

\emph{Stabilizers} are sets of symmetries that map literals back to themselves in certain ways.
The \emph{pointwise stabilizer}~$\Aut(\F)_{(L)}$
contains all symmetries of $\F$
that stabilize each individual literal in a set of literals~$L$,
while the \emph{setwise stabilizer}~$\Aut(\F)_{\{L\}}$
contains all symmetries of $\F$
that map $L$ back to itself.
Formally,
\begin{align*}
    \Aut(\F)_{(L)} & \coloneqq \{\symm \in \Aut(\F) \;|\; \symm(\ell) = \ell \text{ for all } \ell \in L \}, & \text{(Pointwise)} \\
    \Aut(\F)_{\{L\}} & \coloneqq \{\symm \in \Aut(\F) \;|\; \symm(L) = L\}. & \text{(Setwise)}
\end{align*}
Since the condition for pointwise stabilizers is stronger than the one for setwise stabilizers,
we have that $\Aut(\F)_{(L)} \subseteq \Aut(\F)_{\{L\}}$.
Each set is always nonempty,
since the identity permutation is a member of both stabilizers.

The \emph{orbit} of a literal~$\ell$ is the set of literals that can be reached from $\ell$ by a permutation group~$G$ of symmetries of $F$.
Often,
$G = \Aut(\F)$,
but $G$ is allowed to be any subgroup of $\Aut(\F)$.
Two literals $\ell_1$ and $\ell_2$ are \emph{in the same orbit}
with respect to $G$
if there exists a symmetry $\symm \in G$
such that $\symm(\ell_1) = \ell_2$.
The orbits under $G$ form a partition of the literals,
where literals in the same orbit are in the same equivalence class.
Figure~\ref{fig:modelgraph} (right) illustrates the orbits of Example~\ref{ex:sym}.

For a more general introduction to permutation groups,
we refer the reader to work by Seress~\cite{seress_2003}.

\subsection{Substitution Redundancy Proofs} \label{sec:prelims::subsec:sr}

When performing static symmetry breaking for SAT,
we add \emph{symmetry-breaking clauses} to a CNF formula~$F$
to forbid symmetric solutions.
For the additions to be valid,
we must prove that each addition preserves the equisatisfiability of $\F$.
We can write such a proof of equisatisfiability
in a \emph{clausal proof system},
and we can check the proof with a formally-verified proof checker.
In this paper,
we use the \emph{substitution redundancy} (\sr) proof system~\cite{GochtN21,RebolaPardo23,BussHierarchy},
which is a generalization of the popular
\rat{}~\cite{RAT12}
and \pr{}~\cite{PRTheory}
proof systems.

In a clausal proof system,
each proof step either adds a clause or deletes a clause.
Added clauses~$C$ must be \emph{redundant},
meaning that $\F$ and $(\F \land C)$ are equisatisfiable.
Addition steps may also include a \emph{witness}~$\wit$
that helps prove that $C$ is redundant.
Crucially,
these witnesses allow for efficient proof checking.

In \sr,
the witness is a substitution~$\wit : \text{Var}(\F) \rightarrow \text{Lit}(\F) \cup \{ \top, \bot \}$
that maps each variable to a literal or to a fixed truth value.\footnote{
    \pr{} witnesses are partial assignments, and \rat{} witnesses are partial assignments on a single literal.
    Thus, \sr{} is a natural generalization of these two systems.
}
This substitution extends to literals in the natural way.
As it turns out,
substitutions can express the symmetry reasoning
involved in orbitopal fixing,
which makes it easy to generate \sr{} proofs
for the symmetry-breaking clauses we discuss in this paper.

To show that $C$ is redundant,
it is sufficient to show that $C$ is \emph{substitution redundant} (\sr{}) for $F$.
In the definition below,
we use the following notation.
We write $\lnot C$ for the negation of the disjunctive clause~$C$,
i.e.,
$\lnot C \coloneqq \wedge_{\ell \in C}\, \ol \ell$.
We write $F_{| \wit}$
for the reduction of the formula~$F$ under the substitution~$\wit$,
where every literal~$\ell$ in $F$ is replaced with $\wit(\ell)$.
We write $\vdash_1$ for entailment via unit propagation,
where $F \vdash_1 \bot$ means that $F$ causes a contradiction under unit propagation,
$F \vdash_1 C$ means that $F \land \lnot C \vdash_1 \bot$,
and $F \vdash_1 G$ means that
$F \land \lnot D \vdash_1 \bot$ for all $D \in G$.

\begin{definition}[Substitution redundant]
    A clause~$C$ is \emph{substitution redundant}
    for a formula~$\F$
    if there exists a substitution~$\wit$
    such that $\F \wedge \lnot C \vdash_1 (\F \wedge C)_{|\wit}$.\label{def:sr}
\end{definition}

Intuitively,
the witness~$\wit$ provides a way to repair any assignment~$\assn$
that satisfies $F$ but not $C$ into an assignment that satisfies both.
If $\wit$ expresses a symmetry of $\F$,
then it suffices to show that the repaired assignment $\assn \circ \wit$ satisfies $C$,
where ``$\circ$'' acts as a kind of function composition,
with $(\assn \circ \wit) (\ell) = \wit (\ell)$ if $\wit (\ell) \in \{ \top, \bot \}$
and $(\assn \circ \wit) (\ell) = \assn (\wit (\ell))$ otherwise.

The \sr{} rule uses unit propagation $\vdash_1$
rather than general entailment $\vDash$
because 
the use of unit propagation enables the \sr{} rule to be checked efficiently by proof checkers.
Today,
only the \dsr{}/\lsr{}~\cite{CodelAH24} and \veripb{}~\cite{GochtN21} proof formats support \sr{} reasoning.
Our tool can generate proofs in either format.

\begin{example}\label{ex:sr-sec-php}
Consider the pigeonhole problem (PHP) of placing $m$ pigeons into $n$ holes
such that each pigeon gets its own hole.
Whenever $m > n$,
this task is impossible.
A common SAT encoding of PHP is:
\begin{equation*}
    \textsc{php} (m, n) = \bigwedge_{j = 1}^{m} \left(\bigvee_{i = 1}^{n} p_{i, j} \right)
    \, \wedge \,
    \bigwedge_{i = 1}^{n} \, \bigwedge_{1 \leq j < k \leq m}  \left( \ol{p}_{i, j} \lor \ol{p}_{i, k} \right),
\end{equation*}
where the variables~$p_{i, j}$ mean that pigeon~$j$ is placed in hole~$i$.
When visualized as a matrix,
the $m$ columns contain the at-least-one constraints,
and the $n$ rows contain the at-most-one constraints.

This encoding exhibits a lot of symmetry.
In particular,
we are free to relabel the holes or the pigeons however we wish.
(In other words, the encoding exhibits \emph{row symmetry}; see Section~\ref{subsec:orbitopal}.)
The presence of this symmetry allows us to use \sr{} reasoning to add redundant clauses to the formula.

Suppose we want to use the \sr{} rule to show that pigeon~1 does not go in hole~$1$,
i.e.,
that the unit clause $C = \{ \ol{p}_{1, 1} \}$ is \sr.
Since the left-hand side of the $\vdash_1$ turnstile in the \sr{} rule assumes $\lnot C$,
we are essentially assuming that pigeon~1 gets placed in hole~1.
To ``repair'' this situation,
we will use the witness~$\wit$ that swaps holes 1 and 2,
meaning that pigeon~1 now gets placed in hole~2.
Formally,
$\wit \coloneqq ( p_{1, 1} \mapsto \bot,\, p_{2, 1} \mapsto \top,\, p_{1, j} \mapsto p_{2, j},\, p_{2, j} \mapsto p_{1, j} )$.
Note that $\wit$ explicitly sets the truth values for $p_{1, 1}$ and $p_{2, 1}$,
which forces pigeon~1 to be placed in hole~2.
All other variables for holes 1 and 2 get swapped.
Viewing the variables as a matrix,
$\wit$ swaps rows 1 and 2.

We now show that $C$ and $\wit$ satisfy the \sr{} condition.
The good news is that most clauses in $(F \land C)_{| \wit}$ have a trivial unit propagation refutation.
In general,
any clause $D \in F \land C$
where $D_{| \wit} = \top$ or $D_{| \wit} = D$ has a trivial refutation.
Here,
the at-least-one column constraint containing $p_{2, 1}$
and any at-most-one row constraints containing $\ol{p}_{1, 1}$ are satisfied by $\wit$,
and the remaining column constraints and the constraints for rows 3 through $n$ are mapped back to themselves under $\wit$.
That leaves the at-most-one constraints for rows 1 and 2.

The row 1 constraints are easy.
Either they contain $\ol{p}_{1, 1}$ and are satisfied by $\wit$,
or they are mapped to a row 2 constraint,
which causes a trivial refutation.

Finally,
for the row 2 constraints,
we do some unit propagation.
By assuming $\lnot C = \{ p_{1, 1} \}$ on the left-hand side of $\vdash_1$,
we can derive $\{ \ol{p}_{1, j} \}$ for all $j \neq 1$ via unit propagation on $\{ \ol{p}_{1, 1}, \ol{p}_{1, j} \}$.
This lets us derive a refutation with the row 2 constraints on the right-hand side of $\vdash_1$,
since any constraint $\{ \ol{p}_{2, k}, \ol{p}_{2, k'} \}$,
mapped under $\wit$
will contain $\ol{p}_{1, j}$ for some $j$,
conflicting with the $\{ \ol{p}_{1, j} \}$ unit we derived.
\end{example}

\section{Fixing Rules}

In this section, we introduce our new symmetry-breaking techniques.
Notably,
our techniques exclusively \emph{assign} or \emph{fix variables},
i.e.,
they exclusively add unit clauses to the formula.
We also prove that each technique preserves equisatisfiability
and is compatible with \sr{} proof production.

\subsection{Orbitopal Fixing}\label{subsec:orbitopal}

Our first symmetry-breaking technique is \emph{orbitopal fixing},
which combines row symmetry in a matrix of literals~$M$
with the presence of unique literal clauses (ULCs) in the formula
to fix literals of $M$.
Our technique is inspired by a procedure of the same name used in 
MIP \cite{KaibelPeinhardtPfetsch11,mexi2025}.

Intuitively,
a subset of a formula's literals exhibit row symmetry
if they can be arranged into a rectangular matrix~$M$
such that there exist symmetries
that swap any two rows of $M$.
Formally,
let $F$ be a formula,
and let $M := (\ell_{i, j})$ be an $n \times m$ matrix comprising a subset of $\Lit(\F)$.
Then $M$ exhibits \emph{row symmetry}~\cite{FlenerFHKMPW01}
if there are symmetries $\{ \symm_{i_1, i_2} \}_{i_1, i_2 \in [1, n]} \subseteq \Aut(\F)$
that swap rows $i_1$ and $i_2$ via:
\begin{equation*}
    \symm_{i_1, i_2} (\ell) =
    \begin{cases}
        \ell_{i_2, j} & \text{if } \ell = \ell_{i_1, j} \text{ for some } j \in [1, m] \\
        \ell_{i_1, j} & \text{if } \ell = \ell_{i_2, j} \text{ for some } j \in [1, m] \\
        \ell_{i, j} & \text{if } \ell = \ell_{i, j} \in M \text{ and } i \not\in \{i_1, i_2\}\\
    \end{cases}.
\end{equation*}
Note that for our purposes,
row swaps are free to affect literals~$\Lit(\F) \setminus M$ that lie outside of the matrix.
By composing row swaps,
every possible reordering of the rows can be achieved.
In group-theoretic terms,
these swaps generate the symmetric group over the rows.
Row symmetry and related structures 
are crucial for practical symmetry-handling algorithms,
and they can be detected by
generator-based~\cite{DevriendtBBD16,PfetschR19,HojnyP19}
or more-recently developed graph-based approaches~\cite{AndersBR24}.

We now turn to orbitopal fixing.
Suppose our formula~$F$ has a matrix of literals~$M$
that exhibits row symmetry.
If each column of $M$ is a unique literal clause of $F$,
then we may fix
the bottom literal in the first column~$\ell_{n, 1}$ to true
and all literals in the upper-triangular portion above $\ell_{n, 1}$ to false.
Figure~\ref{fig:orbitopal} shows an example of orbitopal fixing.
More formally:

\begin{definition}[Orbitopal fixing]
    Let $F$ be a formula,
    and let $M := (\ell_{i, j})$ be an $n \times m$ matrix
    that exhibits row symmetry in $F$.
    If every column of $M$ is ULC with respect to $F$,
    i.e.,
    if $C_j \coloneqq \{ \ell_{i, j} \}_{i \in [1, n]} \in F$
    is ULC for every $j$,
    then \emph{orbitopal fixing} derives unit clauses
    $(\ell_{n, 1})$ and $(\ol{\ell_{i, j}})$ 
    for every $j \in [1, \min(n, m)]$
    and $i \leq n - j$.\label{def:orbitopal}
\end{definition}

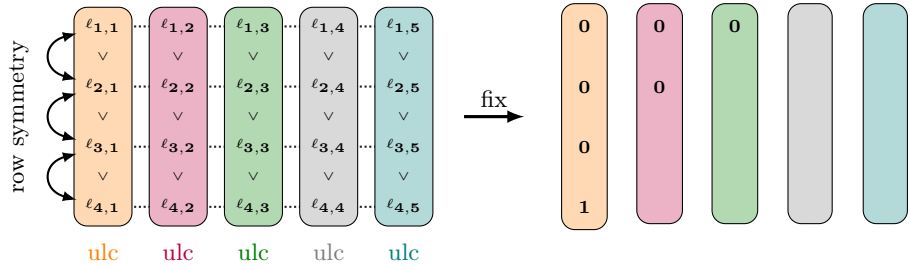
\begin{figure}[t]
    \centering
    \begin{tikzpicture}[scale=0.8] 
        \begin{scope}
            \foreach \x in {0,...,4}
            \foreach \y in {0,...,3}
            {
                \pgfmathtruncatemacro{\xx}{\x+1}%
                \pgfmathtruncatemacro{\yy}{\y+1}%
                \node []  at (1.25*\x,-1*\y) (l\y\x) {\tiny $\mathbf{\ell_{\yy,\xx}}$};
            } 

            \foreach \x in {0,...,4}
            \foreach \y in {0,...,2}
            {
                \node []  at (1.25*\x,-1*\y-0.5) (v\y\x) {\tiny $\vee$};
            } 
            \foreach \x in {0,...,3}
            \foreach \y in {0,...,3}
            {
                \pgfmathtruncatemacro{\n}{\x + 1}%
                \begin{pgfonlayer}{background}
                    \draw [thick,densely dotted,gray!30!black] (l\y\x) to [] (l\y\n);
                \end{pgfonlayer}
            } 
            \node [rotate=90] at (-1.4,-1.5) (ulc3) {\footnotesize row symmetry};

            \begin{pgfonlayer}{background}
                \node[fit=(l00)(l10)(l20)(l30), fill=orange!30, draw=black, rounded corners=5,inner sep=1] {};
                \node[fit=(l01)(l11)(l21)(l31), fill=purple!30, draw=black, rounded corners=5,inner sep=1] {};
                \node[fit=(l02)(l12)(l22)(l32), fill=green!50!black!30,   draw=black, rounded corners=5,inner sep=1] {};
                \node[fit=(l03)(l13)(l23)(l33), fill=gray!30,   draw=black, rounded corners=5,inner sep=1] {};
                \node[fit=(l04)(l14)(l24)(l34), fill=teal!30,  draw=black, rounded corners=5,inner sep=1] {};
            \end{pgfonlayer}
                \node [color=orange] at (1.25*0,-3.75) (ulc0) {\footnotesize ulc};
                \node [color=purple] at (1.25*1,-3.75) (ulc1) {\footnotesize ulc};
                \node [color=green!50!black] at (1.25*2,-3.75) (ulc2) {\footnotesize ulc};
                \node [color=gray] at (1.25*3,-3.75) (ulc3) {\footnotesize ulc};
                \node [color=teal] at (1.25*4,-3.75) (ulc4) {\footnotesize ulc};

                \draw [latex-latex,thick,] (l00) to[bend right=75,looseness=2] (l10);
                \draw [latex-latex,thick,] (l10) to[bend right=75,looseness=2] (l20);
                \draw [latex-latex,thick,] (l20) to[bend right=75,looseness=2] (l30);
        \end{scope}

        \draw [-latex,very thick] (6,-1.5) -- node[midway,above] {fix} (7,-1.5);
        
        \begin{scope}[xshift=8cm] 
            \def\fixlist{0,0,0,\phantom{0},\phantom{0},0,0,,,,0,,,,,1,,,,}
            \foreach \v [count=\i from 0] in \fixlist {%
                \pgfmathtruncatemacro{\x}{mod(\i,5)}%
                \pgfmathtruncatemacro{\y}{floor(\i/5)}%
                \node []  at (1.25*\x,-1*\y) (l\y\x) {\scriptsize{$\mathbf{\v}$}};
            } 

            \begin{pgfonlayer}{background}
                \node[fit=(l00)(l10)(l20)(l30), fill=orange!30, draw=black, rounded corners=5] {};
                \node[fit=(l01)(l11)(l21)(l31), fill=purple!30, draw=black, rounded corners=5] {};
                \node[fit=(l02)(l12)(l22)(l32), fill=green!50!black!30,   draw=black, rounded corners=5] {};
                \node[fit=(l03)(l13)(l23)(l33), fill=gray!30,   draw=black, rounded corners=5] {};
                \node[fit=(l04)(l14)(l24)(l34), fill=teal!30,  draw=black, rounded corners=5] {};
            \end{pgfonlayer}
        \end{scope}
    \end{tikzpicture}
    \caption{
        An example of orbitopal fixing applied to a $4 \times 5$ matrix of literals from some formula~$F$.
        The matrix exhibits row symmetry,
        and the columns are ULCs of $F$.
        Putting these two conditions together,
        we may fix the bottom-left literal~$\ell_{4, 1}$ to true
        and the upper-triangular portion of literals above $\ell_{4, 1}$ to false.
    } \label{fig:orbitopal}
\end{figure}

At first,
it might be surprising that we may fix so many literals at once.
But by using the property of ULCs from Lemma~\ref{lem:ulc},
we may assume that every column of $M$ is satisfied by exactly one literal.
This assumption allows us to strategically
swap the rows with the satisfied literals
into the lower-triangular portion of $M$,
thus allowing us to fix the upper-triangular portion to false.
This argument is formalized in the following lemma.

\begin{theorem}
    Let $\F$ be a formula with the conditions from Definition~\ref{def:orbitopal},
    and let $\F \land L$ be the formula obtained by applying the orbitopal fixing rule to $\F$.
    Then the following hold:
    \begin{enumerate}
        \item $\F$ and $\F \land L$ are equisatisfiable.
        \item There exists an \sr{} proof that adds the unit clauses of $L$ to $\F$ in a particular order,
              namely, column-wise, top to bottom, left to right.
    \end{enumerate}
\end{theorem}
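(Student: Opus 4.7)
The plan is to establish the two items separately. For equisatisfiability, the reverse direction is immediate since $F \subseteq F \wedge L$, and the forward direction uses Lemma~\ref{lem:ulc} together with the row-swap symmetries to convert any model of $F$ into one also satisfying $L$. For the \sr{} proof, I would add the units in the prescribed order using row-swap witnesses modeled on Example~\ref{ex:sr-sec-php}, and handle the lone positive unit $\ell_{n, 1}$ as a \rup{} step.

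For item 1, start with a satisfying assignment $\tau$ of $F$ and invoke Lemma~\ref{lem:ulc} on each ULC column $C_j$ so that $\tau$ sets exactly one literal of $C_j$ to true, at some row~$r_j$. I would then greedily build a row permutation $\rho$ on $[1, n]$ by processing columns left to right: whenever $r_j$ has not been assigned yet, set $\rho(r_j) \coloneqq n - j + 1$. This is well defined because any value assigned at an earlier step $j' < j$ is of the form $n - j' + 1 \geq n - j + 2$, so the target $n - j + 1$ is always fresh. After processing $\min(n, m)$ columns I extend $\rho$ arbitrarily to a full permutation. Since the row swaps $\symm_{i_1, i_2}$ generate the full symmetric group on rows, $\rho$ is realized by some symmetry of $F$; applying it to $\tau$ produces a satisfying assignment in which the true literal of column~$j$ sits in a row $\geq n - j + 1$, verifying~$L$.

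For item 2, process the units in the prescribed order with the following witness scheme. For each negative unit $\ol{\ell_{i, j}}$ (with $i \leq n - j$), take $\wit \coloneqq \symm_{i,\, n - j + 1}$ extended by the forced mappings $\ell_{i, j} \mapsto \bot$ and $\ell_{n - j + 1, j} \mapsto \top$ (cf.\ Example~\ref{ex:sr-sec-php}). Verifying the \sr{} condition amounts to inspecting each clause of $(F_t \land \ol{\ell_{i, j}})_{|\wit}$, where $F_t$ denotes the formula built so far: the ULC clause $C_j$ and any clause containing $\ol{\ell_{i, j}}$ collapse to $\top$ under the forced values; for clauses $D \in F$ containing $\ol{\ell_{n - j + 1, j}}$, the image $D_{|\wit}$ is derivable in a single UP step from $\symm_{i, n - j + 1}(D) \in F$ together with the assumed literal $\ell_{i, j}$; all remaining clauses of $F$ are mapped back into $F$ by $\symm_{i, n - j + 1}$; and each previously-added fixing maps either to itself or to another previously-added fixing. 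The positive unit $\ell_{n, 1}$ needs no symmetry: after all $\ol{\ell_{i, 1}}$ with $i \leq n - 1$ are added, unit propagation on $C_1$ derives $\ell_{n, 1}$ by \rup.

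The main obstacle is the bookkeeping verifying preservation of previously-added units under $\wit$. The key invariant is that for every earlier column $j' < j$, the swap row $n - j + 1$ lies within the already-fixed region: both $n - j + 1 \leq n - j'$ and $i \leq n - j \leq n - j'$ hold, so images of earlier negative fixings are again earlier negative fixings. Within column~$j$ itself, the chain $i' < i \leq n - j < n - j + 1$ rules out $i' \in \{i, n - j + 1\}$, so the swap leaves $\ell_{i', j}$ untouched. The positive fixing $\ell_{n, 1}$ is preserved because $n \neq n - j + 1$ whenever $j \geq 2$, and during column~1 it has not yet been added. Once these conditions are verified, every clause of $(F_t \land \ol{\ell_{i, j}})_{|\wit}$ is either derivable in at most one UP step from $F_t \land \ell_{i, j}$ or trivially true, establishing the \sr{} check.
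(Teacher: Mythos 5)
Your proposal is correct and it follows the paper's overall structure (ULC property plus row-swap symmetries for equisatisfiability; per-unit row-swap witnesses for \sr{}), but the witness scheme in item~2 is a genuine variant. The paper uses the \emph{adjacent} row swap $\sigma_{i, i+1}$ (extended by $\ell_{i,j} \mapsto \bot$ and $\ell_{i+1,j} \mapsto \top$), whereas you swap the current row with the boundary row of the lower triangle, $\sigma_{i,\, n-j+1}$ (extended by $\ell_{i,j} \mapsto \bot$ and $\ell_{n-j+1,j} \mapsto \top$). Both are valid, and your bookkeeping for the boundary-swap witness is sound: the previously-added negative units in earlier columns $j'$ land back in the already-fixed region because both $i \leq n - j'$ and $n - j + 1 \leq n - j'$; within column $j$ the rows $i' < i$ are untouched; and for $j \geq 2$ the swapped rows avoid row $n$, so the already-derived $\ell_{n,1}$ is preserved. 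The paper's adjacent-swap witnesses buy a slightly more uniform invariant (the two touched rows always sit within a window $\{i, i+1\}$ that is contained in the fixed region of every earlier column), while your boundary-swap witnesses more directly mirror the intuition of ``moving the satisfied literal to the boundary'' and line up more naturally with the item-1 argument. Your item-1 construction also differs in form: you build a single row permutation $\rho$ greedily and apply it once, where the paper composes swaps column by column. Both are equivalent in content, since the row swaps generate the full symmetric group on rows; your version is arguably cleaner as a standalone existence argument. Finally, your handling of $\ell_{n,1}$ as a \rup{} step after the column-1 negative fixings matches the paper's prescribed order and is the natural way to close that case.
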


\begin{proof}
    \textbf{(1).}
    By Lemma~\ref{lem:ulc},
    let $\assn$ be a satisfying assignment for $\F$
    that satisfies each column~$C_j$ of the matrix~$M$
    with exactly one true literal.
    We will now transform $\assn$ into a new assignment
    that also satisfies the additional constraints in $\F \land L$.

    First, consider the leftmost column.
    Let $i$ be the row containing the satisfied literal of $C_1$ under $\tau$.
    If $i \neq n$,
    then we may use the row symmetry~$\sigma_{i, n}$
    to change $\assn$
    into a new assignment $\assn \circ \sigma_{i, n}$
    by swapping the truth values for rows $i$ and $n$.
    Since $\symm_{i, n}$ is a symmetry of $\F$,
    the assignment $\assn \circ \sigma_{i, n}$ still satisfies $\F$,
    and by our assumption that $\assn$ satisfies exactly one literal of $C_1$,
    it also satisfies the unit clauses in $L$
    setting $\ell_{n, 1}$ to true and $\ell_{i, 1}$ to false for all $i < n$.

    Now consider the $j$-th column,
    and let $i$ be the row containing the satisfied literal of $C_j$.
    If $i \leq n - j$,
    then we may do the same thing as before
    and swap rows $i$ and $n - j + 1$
    to form $\assn \circ \symm_{i, n - j + 1}$.
    Note that swapping these two rows exchanges only falsified literals
    in the columns to the left,
    since their true literals lie beneath the $(n - j + 1)$-th row.
    As a result,
    we still satisfy the constraints in these columns.
    And since the true literal of $C_j$ is now beneath row $n - j$,
    the new assignment sets every literal at and above this row to false,
    which satisfies the unit clauses in $L$ corresponding to column $j$.
    
    By applying this procedure to the columns in order from left to right,
    we obtain a modified $\assn$ that satisfies both $F$ and $F \land L$.\smallskip

    \textbf{(2).} The series of \sr{} clause additions
    generated by orbitopal fixing
    follows the proof of \textbf{(1)} exactly,
    except we must add the clauses 
    in order (top to bottom, left to right),
    and we must provide a row-permutation witness for each unit clause.
    More specifically,
    when we add the unit clause~$\{ \ol{\ell_{i, j}}\}$,
    we use the witness~$\wit$ with 
    $$\wit(\ell) \coloneqq \begin{cases} 
        \bot & \text{ if } \ell = \ell_{i,j} \\
        \top & \text{ if } \ell = \ell_{i+1,j} \\
        \sigma_{i, i+1}(\ell) & \text{ otherwise.}\\
    \end{cases}$$
    Let $\F \land L'$ be the formula constructed so far by adding unit clauses.

    According to Definition~\ref{def:sr},
    for us to show that $\{ \ol{\ell_{i, j}}\}$ is \sr,
    we must show that
    $F \land L' \land \{ \ell_{i, j} \} \vdash_1 (F \land L' \land \{ \ol{\ell_{i, j}} \})_{| \wit}$.
    For most clauses,
    this is trivial:
    any clauses $C \in F$ whose literals are not modified by $\wit$,
    i.e.,
    where $C \cap \{ \ell \in \Lit(\F) \;|\; \ell \neq \symm(\ell) \} = \varnothing$,
    are immediately entailed.
    The new unit clause~$\{ \ol{\ell}_{i, j} \}$ is also trivially satisfied,
    since $\wit(\ell_{i, j}) = \bot$.

    The remaining types of clauses modified by the witness are:
    (i)~previously added unit literals $L'$,
    (ii)~clauses of $F$ \emph{not} containing the variables of~$\ell_{i,j}$ and~$\ell_{i+1,j}$, and
    (iii)~clauses of~$F$ containing the variables of~$\ell_{i,j}$ and~$\ell_{i+1,j}$.

    \emph{(Case i.)}
    The order of the unit additions
    ensures that all previously added unit literals~$\ell_{i,j'}$
    for~$1 \leq j' < j$ are 
    mapped to other propagated literals~$\ell_{i+1,j'}$.
    Hence, they also entail each other.

    \emph{(Case ii.)} 
    For every clause $C \in F$
    not containing the variables of $\ell_{i, j}$ or $\ell_{i + 1, j}$,
    we observe that the symmetry $\sigma_{i,i+1}$ applies,
    and $\wit(C) \in F$ holds. 

    \emph{(Case iii.)}
    The ULC containing $\ell_{i,j}$ and $\ell_{i+1,j}$ is entailed,
    since $\wit (\ell_{i+1,j}) = \top$.
    And because this clause is a ULC,
    all other clauses may only contain 
    $\ol{\ell_{i,j}}$ and $\ol{\ell_{i+1,j}}$.
    All clauses containing $\ol{\ell_{i,j}}$ are immediately entailed.

    It remains to show that clauses~$C$ containing $\ol{\ell_{i+1,j}}$ but not $\ol{\ell_{i,j}}$ are entailed.
    Consider the clause~$C'$ 
    we obtain by mapping $C$ 
    under the row swap exchanging rows $i$ and $i+1$,
    that is,
    $C' = \sigma_{i,i+1}(C)$.
    Since the row swap is a symmetry of $F$, $C' \in F$ holds.
    In other words, $C'$ is a premise.

    By assumption, $C \cap \{\ol{\ell_{i,j}},\ell_{i,j},\ell_{i+1,j}\} = \varnothing$ and $\ol{\ell_{i+1,j}} \in C$ hold, and thus we can conclude
    $C' = C_{| \wit} \cup \{\ol{\ell_{i,j}}\}$.
    Assuming~$\neg C_{| \wit}$ together with the additional premise~$\ell_{i,j}$
    thus contradicts the premise $C'$.
    Hence, $C_{| \wit}$ is entailed.
    \hfill$\square$

\end{proof}

\begin{example}
    The pigeonhole problem with $m$ pigeons and $n$ holes
    exhibits row symmetry
    when encoded as in Example~\ref{ex:sr-sec-php},
    with the rows corresponding to the holes and the columns corresponding to the pigeons.
    Thus, orbitopal fixing may be applied.
    Figure~\ref{fig:orbitopal} illustrates the case of 5 pigeons and 4 holes.
    When applying orbitopal fixing to this formula,
    we end up with the following formula:
    \begin{equation*}
        \textsc{php} (5, 4)
        \land \ol{p}_{1, 1} \land \ol{p}_{2, 1} \land \ol{p}_{3, 1} \land p_{4, 1}
        \land \ol{p}_{1, 2} \land \ol{p}_{2, 2}
        \land \ol{p}_{3, 1}.
    \end{equation*}
\end{example}

\subsection{Clausal Fixing}

Our second symmetry-breaking technique, called \emph{clausal fixing}, 
is based on the observation that every clause must contain at least one satisfied literal. 
When all literals of a clause belong to the same orbit under the formula's symmetries, 
a representative literal can be fixed without loss of generality.

\begin{definition}[Clausal fixing]
Let $\F$ be a formula with clause~$\{\ell_1, \dots{}, \ell_k\} \in \F$
where all $\ell_i$ are in the same orbit of $\Aut(\F)$. 
That is, 
there are symmetries $\symm$ with $\symm(\ell_1) = \ell_i$ for each $i$. 
Then \emph{clausal fixing} derives the unit clause $\{\ell_1\}$.
\end{definition}

Intuitively, if we have any satisfying assignment, 
we know that there must be at least one satisfied literal in that clause. 
Using the orbit, 
we can always swap the satisfied literal with $\ell_1$.
Thus, we can just assign $\ell_1$ directly.

We now formally prove the correctness of the clausal fixing rule.
\begin{theorem} 
    Let $\F \land \{\ell_1\}$
    be the formula obtained from $\F$
    by an application of the clausal fixing rule
    to clause~$C = \{\ell_1, \dots{}, \ell_k\}$.
    Then the following hold:
    \begin{enumerate}
        \item $\F$ and $\F \land \{ \ell_1 \}$ are equisatisfiable.
        \item There is an \sr{} proof deriving $F \land \{\ell_1\}$ from $F$ in $k$ steps.
    \end{enumerate}
\end{theorem}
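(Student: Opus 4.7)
For part (1), the plan is a direct symmetry argument. Given any satisfying assignment $\tau$ of $F$, the clause $C$ must be satisfied, so some literal $\ell_j \in C$ has $\tau(\ell_j) = \top$. The orbit condition supplies a symmetry $\sigma_j \in \Aut(F)$ with $\sigma_j(\ell_1) = \ell_j$, so the repaired assignment $\tau' \coloneqq \tau \circ \sigma_j$ satisfies $F$ (because $\sigma_j$ is a symmetry of $F$) and has $\tau'(\ell_1) = \tau(\sigma_j(\ell_1)) = \tau(\ell_j) = \top$. Thus $\tau'$ witnesses the satisfiability of $F \land \{\ell_1\}$, establishing equisatisfiability.

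For part (2), my plan is to construct an \sr{} proof that first introduces, for $j = 2, \ldots, k$, the intermediate binary clauses $C_j \coloneqq \{\ell_1, \overline{\ell_j}\}$, and then derives $\{\ell_1\}$ by a single \rup{} step, yielding $k$ steps in total. Each $C_j$ is added with witness $\omega_j \coloneqq \sigma_j$, the orbit symmetry mapping $\ell_1 \mapsto \ell_j$. For the \sr{} check on $C_j$, the left-hand side of the $\vdash_1$ turnstile assumes $\overline{\ell_1} \land \ell_j$, while on the right-hand side $F_{|\sigma_j} = F$ by the symmetry property, and each image clause $(C_{j'})_{|\sigma_j} = \{\ell_j, \overline{\sigma_j(\ell_{j'})}\}$ for $2 \leq j' \leq j$ contains the literal $\ell_j$ asserted on the left. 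Unit propagation therefore discharges every image clause trivially. The final \rup{} step then derives $\{\ell_1\}$: under $\overline{\ell_1}$, each $C_j$ unit-propagates $\overline{\ell_j}$, and combining these with $C \in F$ falsifies every literal of $C$, producing the empty clause.

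The main obstacle is selecting a witness that discharges the \sr{} obligations cleanly. A natural but failing attempt is to add each unit $\overline{\ell_j}$ directly using witness $\sigma_j^{-1}$, so that $(\overline{\ell_j})_{|\sigma_j^{-1}} = \overline{\ell_1}$; this obligation then requires deriving $\overline{\ell_1}$ via unit propagation from $F \land \ell_j$, which is not generally possible. The binary clause reformulation sidesteps this difficulty: the image $(C_j)_{|\sigma_j} = \{\ell_j, \overline{\sigma_j(\ell_j)}\}$ automatically contains the asserted literal $\ell_j$, so the \sr{} check collapses to a trivial satisfaction step, independent of how $\sigma_j$ acts on the remainder of the orbit.
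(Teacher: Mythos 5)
Your proof is correct and follows essentially the same route as the paper's: the same symmetry-composition argument for part (1), and for part (2) the same decomposition into $k-1$ binary clauses $\{\ell_1, \ol{\ell_j}\}$ (each \sr{} via the orbit symmetry witnessing $\ell_1 \mapsto \ell_j$) followed by a single propagation/resolution step deriving $\{\ell_1\}$. The only cosmetic difference is that the paper phrases the final step as resolution while you phrase it as \rup{}, but both amount to the same $k$-step proof.
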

\begin{proof}
    \textbf{(1).}
    Let $\assn$ be a satisfying assignment of $\F$. 
    We show how to transform $\assn$ 
    into an assignment that satisfies $\F \land \{\ell_1\}$.

    Let $\ell_i$ be a satisfied literal of $C$.
    If $\ell_i = \ell_1$,
    then $\assn$ would already satisfy $F \land \{\ell_1\}$,
    so assume otherwise.
    By definition of the clausal fixing rule, 
    for every literal $\ell_i \in C$,
    there exists a symmetry~$\sigma$ with $\sigma(\ell_1) = \ell_i$.
    Using this symmetry,
    we obtain $\assn' = \assn \circ \sigma$, 
    which now sets $\ell_1$ to true.
    Since $\sigma$ is a symmetry, 
    the resulting assignment $\assn'$ still satisfies $F$.

    \textbf{(2).}
    We obtain an \sr{} proof as follows. 
    First,
    the proof derives binary symmetry-breaking clauses
    $\{\ell_1, \ol{\ell}_i\}$ for all $i \in [2,k]$.
    Each of these clauses is \sr{} using the symmetry $\wit_i$ mapping $\wit_i(\ell_1) = \ell_i$.
    After that,
    we may derive the unit clause~$\{ \ell_1 \}$
    by resolution on the added binary clauses.

    It suffices to show that each binary clause $\{ \ell_1, \ol{\ell}_i \}$ is \sr.
    Let $L'$ be the set of binary clauses we have already added.
    We must show that
    \begin{equation*}
        \F \wedge L' \wedge \{ \ol{\ell}_1 \} \wedge \{ \ell_i \} \vdash_1 (\F \wedge L' \wedge \{ \ell_1, \ol{\ell}_1 \})_{| \wit_i}.
    \end{equation*}
    The result is immediate:
    Every clause in $\F$ is entailed,
    since $\wit_i$ is a symmetry of $\F$,
    and every clause in $C \in L' \cup \{ \ell_1, \ol{\ell}_1 \}$
    is entailed by the unit clause $\{ \ell_i \}$,
    since $\ell_1 \in C$
    and $\wit_i (\ell_1) = \ell_i$. \hfill$\square$
\end{proof}


\subsection{Negation Fixing}
Lastly, we describe the \emph{negation fixing} rule.
When a literal can be mapped to its negation,
we can fix it without loss of generality.
\begin{definition}[Negation fixing]
Let $F$ be a formula,
let $\ell \in \Lit(F)$ be a literal,
and let $\sigma \in \Aut(F)$
be a symmetry that maps $\sigma(\ell) = \ol{\ell}$.
Then \emph{negation fixing} derives the unit clause $\{\ell\}$. 
\end{definition}

In a sense, the negation fixing rule also exploits cardinality: trivially, at least one of~$\ell$ and~$\ol \ell$ must be true. 
We prove the correctness of the rule.

\begin{theorem} 
    Let $\F \land \{\ell\}$ be a formula obtained from $\F$ by an application of the negation fixing rule.
    Then the following hold: 
    \begin{enumerate}
        \item $\F$ and $\F \land \{\ell\}$ are equisatisfiable.
        \item The unit clause $\{\ell\}$ is \sr{}.
    \end{enumerate}
\end{theorem}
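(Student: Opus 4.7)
\medskip

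\noindent\textbf{Proof proposal.} The plan is to reuse the same witness for both parts of the theorem, namely the given symmetry~$\sigma$ viewed as a substitution. This works because $\sigma$ swaps $\ell$ and $\ol{\ell}$ (since $\sigma(\ell) = \ol\ell$ and symmetries commute with negation, so $\sigma(\ol\ell) = \ell$), so applying~$\sigma$ flips the value of~$\ell$ in any assignment while preserving satisfaction of~$F$.

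For part~(1), I would start from an arbitrary satisfying assignment~$\tau$ of~$F$. If $\tau(\ell) = \top$, we are done, so assume $\tau(\ell) = \bot$ and consider $\tau' \coloneqq \tau \circ \sigma$. Since $\sigma \in \Aut(F)$, the assignment $\tau'$ still satisfies $F$, and $\tau'(\ell) = \tau(\sigma(\ell)) = \tau(\ol\ell) = \top$, so $\tau'$ satisfies $F \land \{\ell\}$. The reverse direction of equisatisfiability is immediate since $F \subseteq F \land \{\ell\}$.

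For part~(2), I would take the \sr{} witness to be $\wit \coloneqq \sigma$ (extended from literals to a substitution in the obvious way). By Definition~\ref{def:sr}, it suffices to show that $F \land \{\ol\ell\} \vdash_1 (F \land \{\ell\})_{|\wit}$. Because $\sigma$ is a symmetry of $F$, we have $F_{|\wit} = F$, and every clause of $F_{|\wit}$ is entailed trivially. For the newly added unit, $\{\ell\}_{|\wit} = \{\sigma(\ell)\} = \{\ol\ell\}$, which is discharged directly by the assumed premise $\{\ol\ell\}$ on the left-hand side of $\vdash_1$. No unit propagation beyond these immediate matches is needed.

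I do not anticipate a real obstacle here: the hypothesis $\sigma(\ell) = \ol\ell$ is already the crux of the argument, and negation fixing is essentially the degenerate case of the earlier fixing rules where a single symmetry exchanges a literal with its own negation. The only care needed is to verify that the witness $\sigma$ is well-typed as an \sr{} substitution (which is immediate, since permutations of $\Lit(F)$ that commute with negation are a special case of substitutions $\Var(F) \to \Lit(F) \cup \{\top,\bot\}$) and that the reduction $(F \land \{\ell\})_{|\wit}$ collapses exactly to $F \land \{\ol\ell\}$, which follows from $\sigma(F) = F$.
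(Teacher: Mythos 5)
Your proposal is correct and follows essentially the same approach as the paper: for equisatisfiability you repair a satisfying assignment by composing with~$\sigma$, and for \sr{} you use~$\sigma$ itself as the witness. You also usefully spell out the details of part~(2) (that $F_{|\wit} = F$ and $\{\ell\}_{|\wit} = \{\ol\ell\}$, each discharged trivially under unit propagation), which the paper leaves implicit.
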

\begin{proof}
    \textbf{(1.)}
    Let $\assn$ be a satisfying assignment of $\F$. 
    If $\assn$ sets $\ell$ to true,
    then we're done,
    so assume otherwise.
    Then the assignment $\assn \circ \symm$ satisfies $\F \wedge \{ \ell \}$,
    where $\symm$ is the symmetry from the negation fixing rule swapping $\ell$ with $\ol{\ell}$.

    \textbf{(2.)} The clause $\{\ell\}$ is \sr{} using as witness the symmetry mapping $\sigma(\ell) = \ol \ell$.
    \hfill$\square$
\end{proof}


\subsection{Repeated Applications of Rules}\label{subsec:repeated-rules}

All of our rules use the symmetries of a formula $\F$ 
to add a set of unit clauses~$L$ to the formula,
yielding a new formula $\F \land L$.
In turn, subsequent rule applications would use the symmetries of $F \land L$. 
However, recomputing formula symmetries after \emph{every} 
rule application is not practical.

Instead of recomputing symmetries,
we can \emph{update} the set of applicable symmetries 
using pointwise and setwise stabilizers (see Section~\ref{sec:symmetries}).
This approach may indeed yield fewer applicable symmetries than 
computing the full group~$\Aut(F \land L)$,
but it's cheaper to do so.

Let us now observe that by stabilizing the set of added unit literals $L$, we obtain symmetries of $F \land L$ 
from the symmetries of $F$.
\begin{lemma} 
    Let $F$ be a CNF formula and $L \subseteq \Lit(F)$ a subset of its literals.
    It holds that $\Aut(F)_{\{L\}} \subseteq \Aut(F \land \{\{\ell\} \;|\; \ell \in L\})$.
\end{lemma}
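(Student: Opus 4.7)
The plan is to unpack the three definitions involved---syntactic symmetry, setwise stabilizer, and the augmented formula $F' \coloneqq F \land \{\{\ell\} \;|\; \ell \in L\}$---and then check directly the two conditions from the definition of syntactic symmetry (Section~\ref{sec:symmetries}) for an arbitrary element of $\Aut(F)_{\{L\}}$.

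First I would fix $\symm \in \Aut(F)_{\{L\}}$. By definition, $\symm$ is a permutation of $\Lit(F)$ that commutes with negation, satisfies $\symm(F) = F$, and sends $L$ to itself as a set, i.e., $\symm(L) = L$. I would then note that since $L \subseteq \Lit(F)$, the unit clauses $\{\{\ell\} \;|\; \ell \in L\}$ introduce no new variables, and hence $\Lit(F') = \Lit(F)$. Consequently $\symm$ is already a permutation on the correct literal set, and condition~(2) from Section~\ref{sec:symmetries} (commutation with negation) is inherited unchanged.

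The remaining step, and the only one that uses the setwise hypothesis, is to verify condition~(1): $\symm(F') = F'$. Since $\symm(F) = F$ already, it suffices to show that $\symm$ fixes the set of added unit clauses. Applying $\symm$ clause-wise gives
\[
    \symm\bigl(\{\{\ell\} \;|\; \ell \in L\}\bigr) \;=\; \{\{\symm(\ell)\} \;|\; \ell \in L\} \;=\; \{\{\ell'\} \;|\; \ell' \in \symm(L)\},
\]
and by the setwise stabilizer property $\symm(L) = L$, the right-hand side equals $\{\{\ell\} \;|\; \ell \in L\}$.

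There is no real obstacle here; the only subtlety worth highlighting is that we genuinely need the \emph{setwise} stabilizer and not merely the pointwise one, since individual unit clauses may be permuted among themselves---what matters is that the collection of unit clauses, viewed as a set, is preserved. This is precisely what $\symm(L) = L$ guarantees.
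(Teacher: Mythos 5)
Your proof is correct and follows essentially the same route as the paper's: unpack $\symm \in \Aut(F)_{\{L\}}$, observe $\symm(F) = F$ handles the original clauses, and use $\symm(L) = L$ to show the added unit clauses are permuted among themselves. You are slightly more explicit in noting that $\Lit(F') = \Lit(F)$ and that the negation-commuting condition is inherited, but the substance of the argument is identical.
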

\begin{proof}
    Let $\sigma \in \Aut(F)_{\{L\}}$.
    This means that $\sigma(F) = F$ and $\sigma(L) = L$.
    Consider a clause $C$ of the formula $F \land \{\{\ell\} \;|\; \ell \in L\}$.
    If $C \in F$, then $\sigma(C) \in F \land L$ follows.
    If $C = \{\ell\}$ is one of the unit clauses from $L$, then $\sigma(\ell) \in L$ follows, 
    and $\{\sigma(\ell)\} \in \{\{\ell\} \;|\; \ell \in L\}$.
    Hence, $\sigma$ is a symmetry of $F \land \{\{\ell\} \;|\; \ell \in L\}$.
    \hfill$\square$
\end{proof}

Unfortunately,
setwise stabilizers are also expensive to compute. 
(In fact, the problem of computing them 
is at least as hard as computing symmetries~\cite{Luks91}.)
However, since $\Aut(F)_{(L)} \subseteq \Aut(F)_{\{L\}}$ holds, 
we can use pointwise stabilizers instead.
Polynomial-time algorithms exist to find pointwise stabilizers~\cite{seress_2003},
and they are often efficient in practice.

\section{Implementation Details}
\label{sec:implementation}

We implemented our new fixing algorithms in the existing 
\satsuma{} symmetry breaking tool~\cite{AndersBR24}.
The tool is implemented in \texttt{C++}.

To enable efficient symmetry handling,
\satsuma{} simplifies the formula in various ways:
Duplicate literals are removed from clauses,
duplicate clauses and tautological clauses are removed from the formula,
and unit propagation is applied until fixpoint.
But other than these simplifications,
the tool only adds symmetry-breaking clauses.
In particular, the ordering of literals and clauses in the formula remains unchanged.

\paragraph{Orbitopal Fixing.} 
For the orbitopal fixing approach,
we leverage the existing structure detection of \satsuma{} for row symmetry,
row-column symmetry,
and the symmetries of so-called Johnson graphs~\cite{AndersBR24}.\footnote{
    A Johnson graph $J(n, k)$ represents the $k$-element subsets of an $n$-element set,
    where two vertices share an edge if their set intersection has cardinality $(k - 1)$.
    Intuitively,
    Johnson graphs model the symmetries of the edges of complete graphs,
    or,
    more generally,
    relational structures.
}
Although orbitopal fixing is only defined over row symmetry, 
the more complex structures identified
by \satsuma{} often \emph{contain} a row symmetry.
In particular,
row-column symmetry is inherently composed of two row symmetries.
For Johnson symmetry,
the relationship is more intricate, 
but certain instances of this structure can also contain an underlying row symmetry.

Once a structure is identified as potentially having row symmetry,
we check its columns to see if they coincide with a unique literal clause in the formula.
Orbitopal fixing is applied to columns that fulfill the condition.

\paragraph{Clausal Fixing.}
Our implementation of clausal fixing follows a four-step procedure:
\begin{enumerate}
    \item \textit{(Orbits.)} 
         We first compute the orbits of the currently considered group. 

    \item \textit{(Check clauses.)} 
          Each clause is considered once. 
          If all of its literals belong to the same orbit,
          then the clausal fixing rule is applied
          and one literal is propagated.
          Since orbit partitions refine strictly under pointwise stabilizers, 
          a clause that does not qualify at this stage will not qualify at any later stage.

      \item \textit{(Witness.)} 
          To generate \sr{} proofs, 
          we must identify symmetries that can act as witnesses.
          When we apply clausal fixing to a literal~$\ell$ in a clause~$C$, 
          we explicitly compute 
          symmetries $\symm$ such that $\symm(\ell) = \ell'$ for each $\ell' \in C$.

      \item \textit{(Stabilize.)} Whenever propagation occurs, 
            we update the group by taking the pointwise stabilizer 
            of the propagated literals.
            Then we go back to Step~1
            and recompute the orbits for the refined group.
\end{enumerate}
The process stops once each clause has been checked once.

We employ two different algorithms to compute pointwise stabilizers and witness symmetries:
a more involved Schreier-Sims-based implementation,
and a faster heuristic for binary clauses.
\begin{enumerate}
    \item The Schreier-Sims algorithm \cite{SIMS1970169,seress_2003} 
        computes pointwise stabilizers.
        Internally, it stores a so-called transversal which can immediately 
        provide the necessary witness symmetries.
        While it tends to be quite fast for many groups,
        it often exhibits quadratic scaling in practice. 
        As a result,
        we set computational limits on the use of Schreier-Sims in our implementation.
        We use the Schreier-Sims implementation of the \dejavu{}~\cite{AndersS21} library.

    \item For large groups, we implement a more rudimentary heuristic 
        which only tests binary clauses.
        It greedily searches for
        an existing generator that can serve as the witness symmetry.
        The heuristic takes pointwise stabilizers by 
        filtering generators to ones that stabilize the desired points.
\end{enumerate}

\paragraph{Negation Fixing.}
Negation fixing follows a very similar strategy to clausal fixing,
except instead of iterating over clauses,
it iterates over variables. 
For each variable~$v$, 
we check if $v$ and $\ol v$ are in the same orbit.
As with clausal fixing, 
we employ both a Schreier-Sims based implementation 
and a more efficient heuristic.


\begin{figure}[t]
\centering
\includegraphics[width=\textwidth]{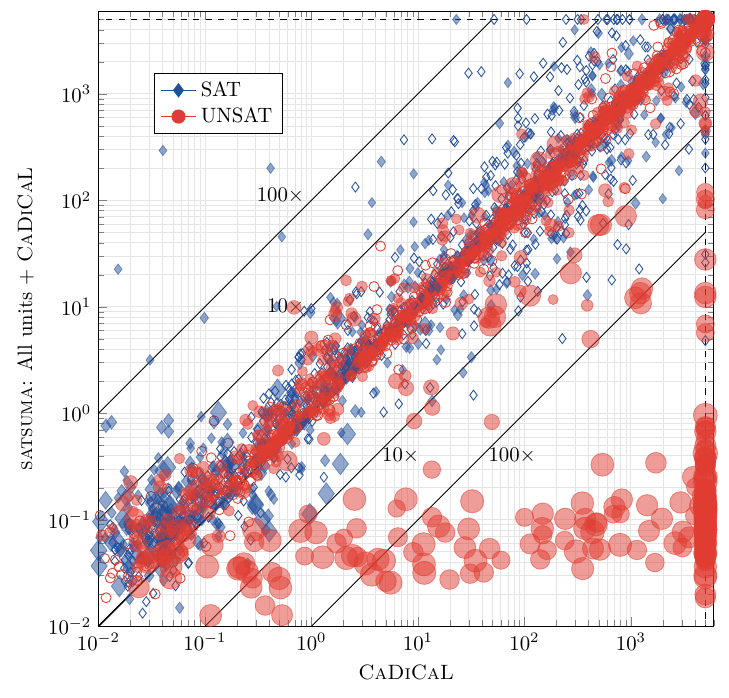}
\vspace{-20pt}
\caption{A scatter plot of \cadical{} with and without fixing on the anniversary suite.
The times are in seconds and include preprocessing time.
Empty marks denote that no units were added.
The size of the marks correlates to the number of fixed units relative to the number of formula variables.
Points below the diagonal benefited from fixing.}
\label{fig:scatter-anni}
\end{figure}

\section{Experimental Evaluation}

We evaluated the effectiveness of our fixing techniques as implemented in \satsuma{} on three benchmark suites.
The first suite is from
\href{https://benchmark-database.de/?query=track%3Danni_2022}{the anniversary track}
of the SAT Competition 2022~\cite{satcomp2022},
comprising non-random benchmarks from the SAT Competitions 2002 to 2021,
duplicates excluded.
The second suite comprises the benchmarks from the
\href{https://benchmark-database.de/?query=track%3Dmain_2025}{main track}
of the SAT Competition 2025.
The third suite
comprises highly symmetric synthetic benchmarks
that have appeared in various papers on symmetry breaking,
including pigeonhole, Tseitin, clique coloring, and graph coloring formulas,
as well as multiple instances from Ramsey theory.   

For each suite, we filtered out all formulas larger than 1 GB in size, 
since for such large formulas symmetry handling often incurs 
prohibitive computational overhead,
and are thus skipped by the symmetry breaking tools anyway.
After this filter,
the anniversary suite has 5344 formulas,
the 2025 suite has 386 formulas,
and the synthetic suite has 137 formulas.

We ran our experiments on the cluster at the \href{https://www.psc.edu/}{Pittsburgh Supercomputing Center}~\cite{bridges2}.
Each machine has 128 cores and 256 GB of RAM.
We ran every tool in parallel across all cores.
We used a timeout of 5000 seconds for \cadical{}\footnote{
    \url{https://github.com/arminbiere/cadical}. We used version 2.1.3.
}
(which matches the official timeout for the SAT competitions),
as well as a 300 second timeout for \satsuma{} and \dsrtrim~\cite{CodelAH24},\footnote{
    \url{https://github.com/ccodel/dsr-trim}.
}
an (unverified) \sr{} proof checker.
Notably,
all \satsuma{} runs finished before timeout.

We ran \satsuma{}
with five different settings:
each of our three fixing techniques individually,
all of our techniques combined (``all-units''),
and a control version of \satsuma{} that produces lex-leader constraints.
We checked all \satsuma-generated \sr{} proofs with \dsrtrim{}.
Then
we ran \cadical{} on all formulas.

Figure~\ref{fig:scatter-anni} shows the results
of applying all fixing techniques to the anniversary benchmarks.
Our fixing techniques allow \cadical{} to solve many dozens of formulas very quickly,
including 67 formulas that can be solved in a second of preprocessing time,
while \cadical{} without symmetry breaking times out after 5000 seconds.
Many empty points below the diagonal
are due to \satsuma{}'s formula simplification
(especially on satisfiable instances).
Note that few points are clearly above the diagonal,
thereby showing that our techniques have minimal negative impact on the performance
across the anniversary benchmarks. 

Figure~\ref{fig:scatter-synth-sc25} shows similar results on the synthetic and 2025 suites.
Notably,
most of the synthetic benchmarks become easy after fixing,
showing, somewhat surprisingly,
that symmetry breaking is all that is needed to solve these instances. 

\begin{figure}[h]
\centering
\includegraphics[width=.49\textwidth]{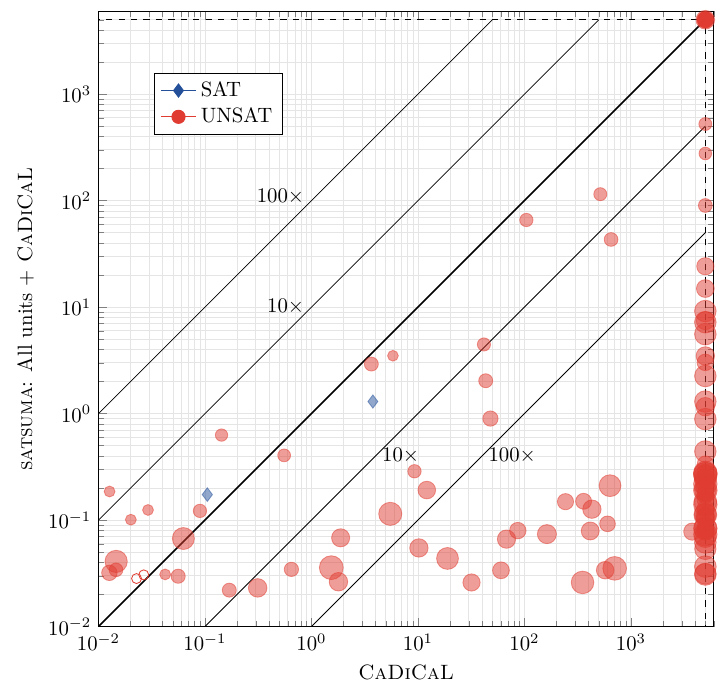}
\hfill
\includegraphics[width=.49\textwidth]{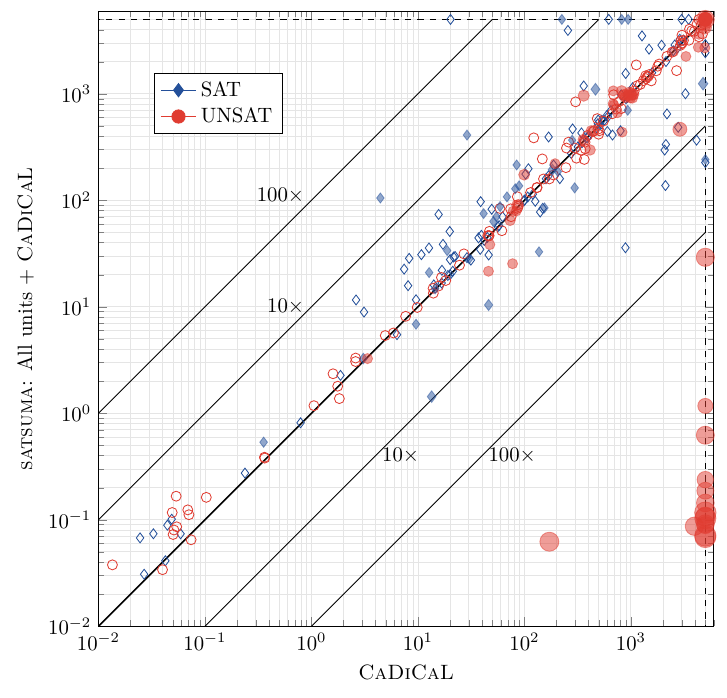}
\vspace{-10pt}
\caption{Scatter plots of \cadical{} with and without fixing on the synthetic suite (left) and the 2025 suite (right).
The times are in seconds and include preprocessing time.}
\label{fig:scatter-synth-sc25}
\end{figure}

Figure~\ref{fig:cactus-anni} shows a cumulative solved formulas (CSF) plot
comparing the runtimes of \cadical{} with each \satsuma{} setting against base \cadical{} on the anniversary suite.
The left CSF plot shows the regression due to symmetry breaking on satisfiable instances,
since symmetry breaking on satisfiable formulas typically has no benefit and can be harmful instead.
The plot shows that the lex-leader setting performs the worst,
while the new techniques limit the amount of regression.
The regression is similar across \satsuma{} settings,
which suggests that \satsuma{}'s formula preprocessing and simplification dominate the costs.

The right CSF plot of Figure~\ref{fig:cactus-anni} shows the runtime performance on unsatisfiable instances.
Lex-leader performs the best,
but the all-units setting is close behind.
Below them,
each individual setting performs similarly,
with clausal fixing performing the worst.
All techniques outperform base \cadical.

\begin{figure}[h]
\centering
\includegraphics[width=.49\textwidth]{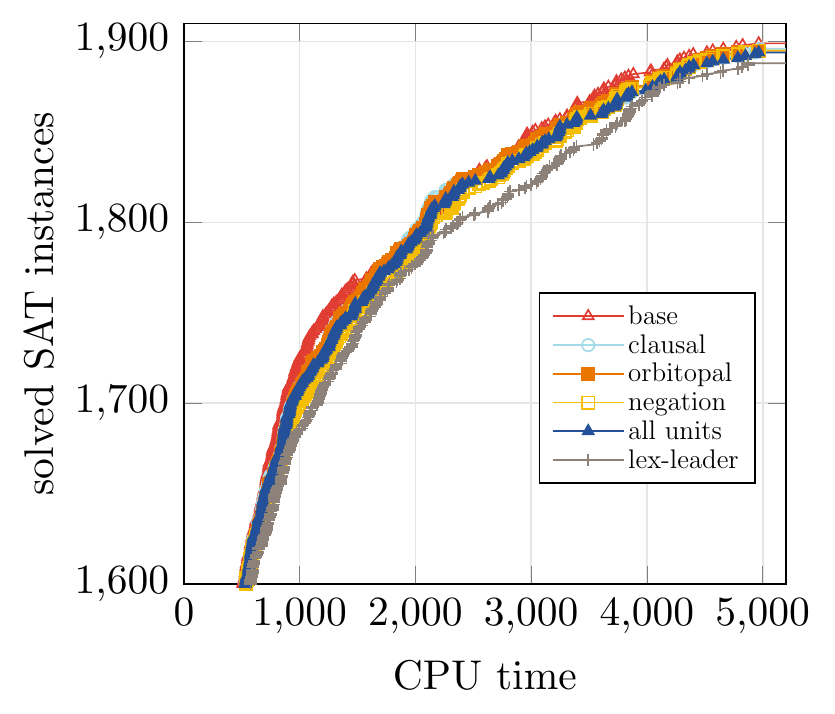}
\hfill
\includegraphics[width=.49\textwidth]{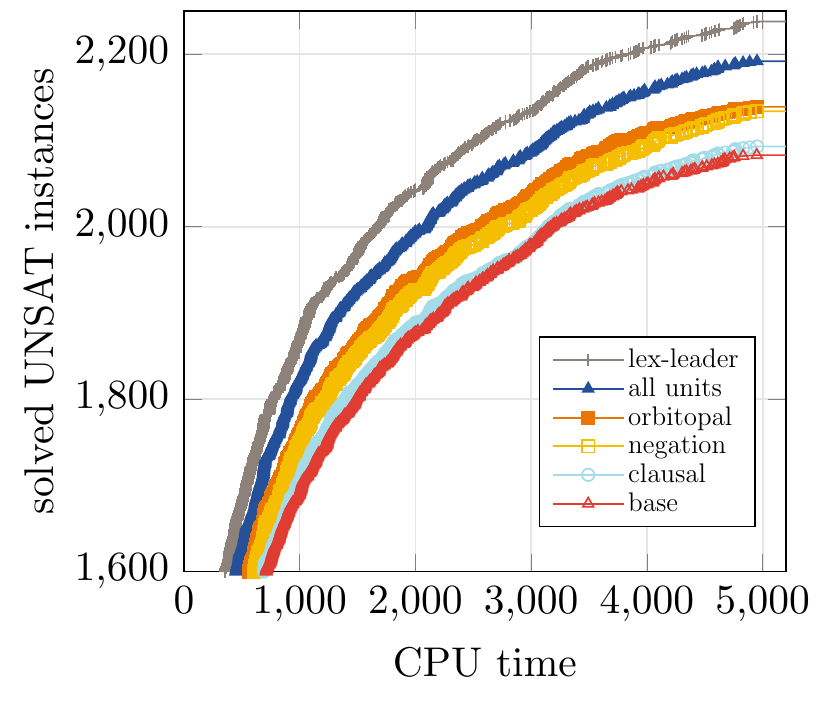}\vspace{-10pt}
\caption{CSF plots on SAT (left) and UNSAT (right) formulas of the anniversary suite}
\label{fig:cactus-anni}
\end{figure}

Figure~\ref{fig:cactus-synth-sc25} shows the CSF plots for the synthetic and 2025 suites. 
On the synthetic instances, the all-units configuration performs the best.
The main reason for its success is that negation fixing can easily solve all Tseitin formulas,
in contrast to lex-leader.
(A similar approach could be implemented for lex-leader,
but this is currently not present in \satsuma{}.)
On the 2025 suite,
lex-leader shows the strongest performance, 
followed by the all-units configuration.

\begin{figure}[t]
\centering
\includegraphics[width=.49\textwidth]{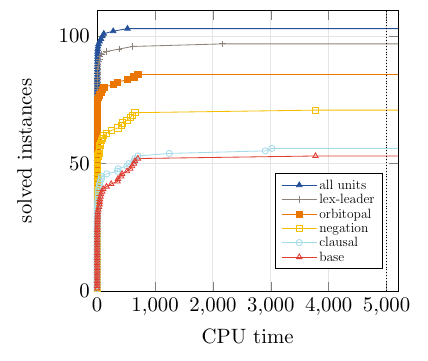}
\hfill
\includegraphics[width=.49\textwidth]{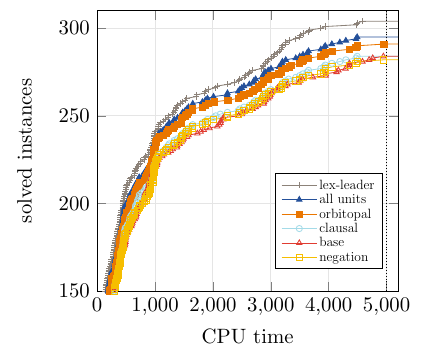}
\vspace{-10pt}
\caption{CSF plots on the synthetic suite (left) and the 2025 suite (right)}
\label{fig:cactus-synth-sc25}
\end{figure}

\begin{table}[t!]
    \caption{
        Results across the benchmark suites for the average preprocessing time in seconds (PPT),
        the average \cadical{} runtime with preprocessing in seconds (CRT),
        and the average number of units added to the formula by the fixing technique (\#U).
    }
    \label{fig:table:summary-stats}
    \centering
    \begin{tabular}{c@{~\,~}r@{~\,~}r@{~\,~}r@{~\,~}r@{~\,~}r@{~\,~}r@{~\,~}r@{~\,~}r@{~\,~}r}
    \toprule
        \multicolumn{1}{c}{\makebox[0.5in]{Suite}}
        & \multicolumn{3}{c}{\makebox[0.33in]{Anni22}}
        & \multicolumn{3}{c}{\makebox[0.33in]{Satcomp25}}
        & \multicolumn{3}{c}{\makebox[0.33in]{Synth}} \\
        \cmidrule(l){2-4}\cmidrule(l){5-7}\cmidrule(l){8-10}
        \multicolumn{1}{c}{\makebox[0.5in]{Setting}}
        & \makebox[0.33in]{PPT} & \makebox[0.33in]{CRT} & \makebox[0.33in]{\#U}
        & \makebox[0.33in]{PPT} & \makebox[0.33in]{CRT} & \makebox[0.33in]{\#U}
        & \makebox[0.33in]{PPT} & \makebox[0.33in]{CRT} & \makebox[0.33in]{\#U} \\
        \midrule
        Orbitopal   & 2.51 & 1555.5 & 11.23    &    8.06 & 1746.7  & 8.37   &     0.44 & 1919.2 & 743.0  \\
        Negation    & 1.93 & 1572.0 & 32.59    &    5.77 & 1870.6  & 52.15  &     0.32 & 2469.4 & 16.5   \\
        Clausal     & 1.83 & 1605.7 & 23.64    &    5.45 & 1845.3  & 14.33  &     0.73 & 3041.0 & 15.2   \\
        All         & 1.81 & 1504.5 & 67.30    &    5.27 & 1694.3  & 71.48  &     0.76 & 1250.0 & 768.6  \\
        Lex-leader  & 1.73 & 1472.4 & --       &    4.85 & 1593.7  & --     &     0.32 & 1486.6 & --     \\
        \midrule
        \cadical{}  & --   & 1607.8 & --       &    --   & 1882.9  & --     &     --   & 3138.6 & --     \\
    \bottomrule
    \end{tabular}
\end{table}

\begin{table}[t!]
    \caption{
        Results for \sr{} proof checking times across the various benchmark suites and settings, showing the 
        average \dsrtrim{} checking time in seconds (DCT) and
        the average \lsrcheck{} checking time in seconds (LCT).
    }
    \label{fig:table:proof-stats}
    \centering
    \begin{tabular}{c@{~\,~}c@{~\,~}c@{~\,~}c@{~\,~}c@{~\,~}c@{~\,~}c}
    \toprule
        \multicolumn{1}{c}{\makebox[1.1in]{Suite}}
        & \multicolumn{2}{c}{\makebox[1.1in]{Anni22}}
        & \multicolumn{2}{c}{\makebox[1.1in]{Satcomp25}}
        & \multicolumn{2}{c}{\makebox[1.1in]{Synth}} \\
        \cmidrule(l){2-3}\cmidrule(l){4-5}\cmidrule(l){6-7}
        \multicolumn{1}{c}{\makebox[0.7in]{Setting}}
        & \multicolumn{1}{c}{\makebox[0.5in]{DCT}} & \multicolumn{1}{c}{\makebox[0.5in]{LCT}}
        & \multicolumn{1}{c}{\makebox[0.5in]{DCT}} & \multicolumn{1}{c}{\makebox[0.5in]{LCT}}
        & \multicolumn{1}{c}{\makebox[0.5in]{DCT}} & \multicolumn{1}{c}{\makebox[0.5in]{LCT}} \\
        \midrule
        Orbitopal   & 5.35 & 0.76 &    15.15 & 1.36  &     8.42 &  4.19 \\
        Negation    & 5.76 & 1.43 &    16.80 & 1.64  &     0.04 &  0.01 \\
        Clausal     & 6.76 & 0.86 &    16.13 & 1.29  &     5.29 &  0.16 \\
        All         & 7.48 & 1.41 &    15.91 & 1.48  &     8.67 &  3.18 \\
    \bottomrule
    \end{tabular}
\end{table}

Table~\ref{fig:table:summary-stats} summarizes the data shown in the figures.

As part of our experiments,
we generated and checked \sr{} and \drat{} proofs.
\satsuma{} generated \sr{} proofs when applying our fixing techniques,
and \cadical{} generated \drat{} proofs for unsatisfiable formulas.
All proofs were either accepted by the \dsrtrim{} and \lsrcheck{} proof checkers,
or caused a memout/timeout.
Proof checking time was generally low,
with most proofs finishing in under 15 seconds.
Table~\ref{fig:table:proof-stats} summarizes the times taken for proof checking.

One particular advantage of \satsuma's \sr{} proof generation is that it composes with \cadical's \drat{} proof generation.
For any unsatisfiable symmetry-broken formula,
we appended the \drat{} proof to the end of its \sr{} proof,
and then checked the proof with respect to the original formula.
In this way,
the two proofs form a complete proof of unsatisfiability of the original formula.

\section{Conclusion}
We presented new static symmetry-breaking techniques 
based solely on introducing unit clauses to a given formula,
which we implemented in the state-of-the-art tool \satsuma{}.
The key insight behind these techniques 
was to combine symmetry and cardinality reasoning.
This combination enabled us to, for the first time, 
implement meaningful, practical symmetry breaking 
that produces \sr{} proofs instead of dominance-based proofs.
Our experiments demonstrate significant performance improvements across a wide variety of benchmarks, with reduced regression compared to the lex-leader approach.

As for future work, we hope to expand the present techniques further
to close the gap with the lex-leader approach on unsatisfiable instances.
While this may be difficult by using only unit clauses,
it would be interesting to achieve this without incurring any additional
performance regression on satisfiable instances
while still using lightweight \sr{} proofs.
Indeed, it would be intriguing to find concrete benchmark families
that \emph{can} be efficiently solved in practice using lex-leader constraints,
but evade serious attempts at efficient, practical symmetry handling in \sr{}.
A concrete candidate might be small Ramsey numbers: 
while a very short \sr{} proof for $R(4, 4, 18)$ is known~\cite{CodelAH24},
it is possible to easily solve the slightly harder~$R(3, 7, 23)$ within few minutes using lex-leader constraints.
We wonder if it is possible to generalize the symmetry breaking from the $R(4, 4, 18)$ proof to larger numbers.
Separately, in MIP, orbitopal fixing can be used under more general conditions~\cite{KaibelPeinhardtPfetsch11,bendotti2021}, 
which may be interesting to adapt to SAT.

\section{Data Availability and Reproducibility}

All of our source code, benchmark formulas, and experimental results
can be found at our artifact on Zenodo.\footnote{
    \url{https://doi.org/10.5281/zenodo.17491222}.
}
The artifact contains detailed instructions
for how to download the formulas
and compile the source code in a Docker container.
The artifact also contains scripts to run experiments
and compare the results to the ones from our paper.

\section*{Acknowledgements}
We thank Christopher Hojny for pointing out important literature in MIP. 
This work has benefited substantially from
\href{https://www.dagstuhl.de/seminars/seminar-calendar/seminar-details/25231}{Dagstuhl Seminar 25231}
``Certifying Algorithms for Automated Reasoning.''
This work was supported by the National Science Foundation (NSF) under grant CCF-2415773 and funding from AFRL and DARPA under Agreement FA8750-24-9-1000.

\bibliography{main}
\bibliographystyle{plain}

\end{document}